\DeclareSymbolFont{operators}{OT1}{txr}{m}{n}
\def\operator@font{\mathgroup\symoperators}
\DeclareSymbolFont{italic}{OT1}{txr}{m}{it}
\DeclareSymbolFontAlphabet{\mathrm}{operators}
\DeclareMathAlphabet{\mathbf}{OT1}{txr}{bx}{n}
\DeclareMathAlphabet{\mathit}{OT1}{txr}{m}{it}
\SetMathAlphabet{\mathit}{bold}{OT1}{txr}{bx}{it}
\DeclareSymbolFont{letters}{OML}{txmi}{m}{it}
\DeclareSymbolFont{lettersA}{U}{txmia}{m}{it}
\DeclareSymbolFontAlphabet{\mathfrak}{lettersA}
\DeclareSymbolFont{symbols}{OMS}{txsy}{m}{n}
\tikzset{>=stealth}
\pgfplotsset{scaled y ticks=false}
\renewenvironment{abstract}
 {\small
  \begin{center}
  \bfseries \abstractname\vspace{-.5em}\vspace{0pt}
  \end{center}
  \list{}{
    \setlength{\leftmargin}{0cm}%
    \setlength{\rightmargin}{\leftmargin}%
  }%
  \item\relax}
 {\endlist}
\DeclareMathOperator{\D}{d}
\theoremstyle{plain}
\newtheorem{theorem}{Theorem}[section]
\newtheorem{lemma}[theorem]{Lemma}
\newenvironment{remark}[1][Remark]{\begin{trivlist}
\item[\hskip \labelsep {\bfseries #1}]}{\end{trivlist}}
\begin{document}

\title[Symmetries of double integrals]{Symmetries of certain double integrals\\ related to Hall effect devices}
\author{Udo Ausserlechner\and M. Lawrence Glasser \and Yajun Zhou}
\address{Infineon Technologies Austria AG, Siemensstrasse 2, Villach 9500, Austria}
\email{udo.ausserlechner@infineon.com}

\address{Dpto.~de F\'isica Te\'orica, Facultad de Ciencias, Universidad de Valladolid, Paseo Bel\'en 9, 47011 Valladolid, Spain;
Department of Physics, Clarkson University, Potsdam, NY  13699, USA}

\email{laryg@clarkson.edu}

\address{Program in Applied and Computational Mathematics (PACM), Princeton University, Princeton, NJ 08544, USA; Academy of Advanced Interdisciplinary Sciences (AAIS), Peking University, Beijing 100871, P. R. China }
\email{yajunz@math.princeton.edu, yajun.zhou.1982@pku.edu.cn}
\date{\today}
\thanks{\textit{Keywords}: Incomplete elliptic integrals, complete elliptic integrals, Hall effect. \\\indent\textit{Subject Classification (AMS 2010)}: 33E05\ (Primary), 78A35 (Secondary)}
\maketitle

\begin{abstract}
    One encounters iterated elliptic integrals in the study of Hall effect devices, as a result of conformal mappings of Schwarz--Christoffel type. Some of these double elliptic integrals possess amazing symmetries with regard to the physical parameters of the underlying Hall effect devices. We give a unified mathematical treatment of such symmetric double integrals, in the context of Hall effect devices with three and four contacts.\end{abstract}

\section{Introduction}
As one can easily demonstrate  oneself, if you spin a coin, oriented perpendicular to an inclined plane,  due to the balance of gravity and the gyroscopic force, the coin will move across the plane rather than down it as it does when it is not spinning. The speed at which it moves is determined by various factors such as the tilt of the plane, the rate of spin and the surface conditions. The electrical analogue is the Hall effect: if an electron current is produced, by electrical contacts, across a conducting plate in a perpendicular magnetic field a current $I_H$, and equivalently, a voltage $V_H$,  resulting from the balance between the strength of  the current and the Lorentz force on the electrons, will be detectable between electrodes placed perpendicular to the current. The magnitude of this voltage will depend on the magnetic field strength,  the electrical characteristics of the plate material and its geometry.

 For such a standard four-contact commercial semiconductor Hall device, having two perpendicular reflection lines, one of us \cite{Ausserlechner2015,Ausserlechner2016a,Ausserlechner2017} determined the analytic form of its geometrical factor $G_H$, in the expression for $V_H$, in terms of a double elliptic integral whose two moduli depended on adjustable characteristics of the system. From numerical evaluations of conformal transformations, it was found that $G_H$ exhibited an invariance which could be expressed
as\begin{align}A(p,q):={}&\int_0^{\pi}\frac{\D x}{\sqrt{1-p\cos x}}\int_0^x\frac{\D y}{\sqrt{1+q\cos y}}
\notag\\={}&A(p',q'),\quad\forall p,q\in[0,1], \label{eq:Apq}\end{align}
where  $ p'=\sqrt{1-p^2},q'=\sqrt{1-q^2}$ are  complementary moduli.

While easily verified numerically, a proof of \eqref{eq:Apq} was, after some delay, finally presented by two of us  \cite{GlasserZhou2017} on the basis of somewhat recondite integral manipulations.        Shortly afterwards                                                                                                              David Broadhurst and Wadim Zudilin gave a different proof \cite{BroadhurstZudilin2017} for the diagonal case $A(p,p) =A(\sqrt{1-p^2},\sqrt{1-p^2})$, and discussed its arithmetic implications.

Since then a similar investigation of the three-contact Hall devices, but still possessing mirror symmetry, to be described in \S\ref{sec:phys}, has been carried out.           The study of   these novel Hall devices based on Schwarz--Christoffel conformal mappings has led one of us \cite{Ausserlechner2018} to a seemingly more complicated elliptic identity
\begin{align}&
I(\alpha,\beta)\notag\\:={}&\int_0^{\pi/2}\frac{\D \theta}{\sqrt{1-\smash[b]{(1-\alpha)\sin^2\theta}}}\int_0^\theta\frac{\D\phi}{\sqrt{1-\smash[b]{(1-\beta)\sin^2\phi}}}\notag\\{}&-\int_0^{\pi/2}\frac{\sqrt{\alpha(1-\alpha)}\sin\theta\D \theta}{\sqrt{\alpha \smash[b]{ (1-\beta )-(1-\alpha ) \beta  \cos ^2\theta}}\sqrt{1-\smash[b]{(1-\alpha)\sin^2\theta}} }\int_0^\theta\frac{\D\phi}{\sqrt{1-\smash[b]{(1-\alpha)\sin^2\phi}}}\notag\\={}&I(1-\beta,1-\alpha),\quad 0\leq\beta\leq \alpha\leq 1.\label{eq:Iab}
\end{align}

The aim of this article is to offer a mathematical proof of \eqref{eq:Iab} which will be presented in \S\ref{sec:int_id}.
Our major analytic tool in this article is a modest extension of the inhomogeneous differential equation studied by Broadhurst and  Zudilin \cite{BroadhurstZudilin2017}, in the context of $ A(p,p)$. This approach not only allows us to simplify the original proof of  \eqref{eq:Apq} published in   \cite{GlasserZhou2017}, but also sets  \eqref{eq:Apq}  and \eqref{eq:Iab} in a unified framework.

    \section{Physical Background\label{sec:phys}}The classic Hall plates detect a magnetic field orthogonal to the surface of a semiconductor. They have four contacts, whereby current is forced through two non-neighboring contacts and the output voltage is tapped at the other two contacts. At zero magnetic field the electrical behavior of the device is given by an equivalent resistor circuit (ERC) with four terminals. At small magnetic field the change of output voltage is proportional to the magnetic field, the input current, the Hall mobility, the sheet resistance, and a Hall geometry factor  \begin{align} G_{H0}^{(4C)}=\frac{1}{{\mathbf K}'\left(\frac{1-p}{1+p}\right){\mathbf K}\left(\frac{1-f}{1+f}\right)}\int_0^1\frac{\int ^x_0\frac{\D y}{\sqrt{1-y^2}\sqrt{1-\left(\frac{1-p}{1+p}\right)^2y^2}}}{\sqrt{1-x^2}\sqrt{1-\left[1-\left(\frac{1-f}{1+f}\right)^2\right](1-x^{2})}}\D x.\end{align} Here, the parameters  $p$ and $f$ are determined by the input and output resistances, and \begin{align}
\mathbf K(\sqrt{\lambda})\equiv\mathbf K'(\sqrt{1-\lambda}):=\int_0^{\pi/2}\frac{\D\theta}{\sqrt{1-\smash[b]{\lambda\sin^2\theta}}},\quad \lambda\in[0,1)
\end{align}is the complete elliptic integral of the first kind. The Hall geometry factor accounts for the shape of the Hall plate (\textit{i.e.}~its layout) and the size of the contacts.   The quantity   $ G_{H0}^{(4C)}$ can be computed as a function of geometrical parameters of the Hall plate, but it can also be expressed as a function of the resistances in the ERC \cite{Ausserlechner2018}. The thermal noise of a Hall plate at small magnetic field is also described by the ERC. Thus, the signal-to-noise ratio (SNR) of Hall plates relates in a very general way to the ERC. A numerical study of \begin{align}
\mathrm{SNR}\propto\frac{G_{H0}^{(4C)}\sqrt{{{\mathbf K'}(f)\mathbf K}(p)}}{\sqrt{{\mathbf K}(f){\mathbf K'}(p)}}
\end{align} reveals a symmetry: for every Hall plate with small contacts there is another Hall plate with properly chosen large contacts having the same SNR \cite{Ausserlechner2017}. If the Hall plate has $90^{\circ}$ symmetry like a Greek cross or an octagon, numerical evidence suggested that the SNR remains the same for the complementary device, where contacts and isolating boundaries are swapped. Both statements are equivalent to   \eqref{eq:Apq}, and they can be proven rigorously \cite{GlasserZhou2017,BroadhurstZudilin2017}.

\begin{figure}[t]
\begin{minipage}{0.5\textwidth}\fontfamily{ptm}\selectfont\begin{tikzpicture}[scale=0.75]\draw[color=gray!10!white,fill] (0,0)--(0,-1)--(3,{3*tan(15))-1})--(3,{3*tan(15)});
\draw[color=gray!30!white,fill] (0,0)--(0,-1)--(-2,{2*tan(30))-1})--(-2,{2*tan(30)});
\draw[color=gray!50!white,fill] (0,0)--(-2,{2*tan(30)})--(1,{2*tan(30)+3*tan(15)})--(3,{3*tan(15)});
\draw[color=magenta,fill] (.5,{.5*tan(15)})--({.5-1},{.5*tan(15)+1*tan(30)})--(1.5,{1*tan(30)+2.5*tan(15)})--(2.5,{2.5*tan(15)});
\draw[color=cyan,fill] (1.45,{1.45*tan(15)})--({1.45-.9},{1.45*tan(15)+.9*tan(30)})--({1.55-.9},{.9*tan(30)+1.55*tan(15)})--(1.55,{1.55*tan(15)});
\draw[color=cyan,fill] (.75,{.75*tan(15)})--({.75-.9},{.75*tan(15)+.9*tan(30)})--({1.2-.9},{.9*tan(30)+1.2*tan(15)})--(1.2,{1.2*tan(15)});
\draw[color=cyan,fill] (2.25,{2.25*tan(15)})--({2.25-.9},{2.25*tan(15)+.9*tan(30)})--({1.8-.9},{.9*tan(30)+1.8*tan(15)})--(1.8,{1.8*tan(15)});
\draw[color=magenta,fill] (.6,{.6*tan(15)-0.5})--(.5,{.5*tan(15)-0.4})--(.5,{.5*tan(15)})--(2.5,{2.5*tan(15)})--(2.5,{2.5*tan(15)-0.4})--(2.4,{2.4*tan(15)-0.5});

\draw[fill=black,color=black] (1,.69) circle[radius=0.04];
\draw[fill=black,color=black] (1.8,.69) circle[radius=0.04];
\draw[fill=black,color=black] (0.2,.69) circle[radius=0.04];
\draw(1,.69)--(2,2.1);\draw(1.8,.69)--(2.1,2.1);\draw(0.2,.69)--(1.9,2.1);
\node[above] at (2.1,2) {\begin{tiny}Three $n^+$-doped contacts\end{tiny}};

\draw[fill=black,color=black] (1,0) circle[radius=0.04];
\draw(1,0)--(1.3,-1.3);\node[below] at (1.3,-1.2){\begin{tiny}$ n^-$-doped Hall effect region\end{tiny}};

\foreach \y in {-1,-.5,0,0.5,1}\draw[->,color=blue,ultra thick]({\y+1.5+1},{1.5*tan(15)-tan(30)+\y*tan(15)})--({\y+1.5},{1.5*tan(15)+\y*tan(15)});
\node[color=blue] at (1.45,-.8) {\begin{tiny}\begin{rotate}{15}Detectable magnetic\end{rotate}\end{tiny}};
\node[color=blue] at (1.55,-1.1) {\begin{tiny}\begin{rotate}{15}field $ B_y$\end{rotate}\end{tiny}};

\draw[->](-1.85,-1.5)--(-1.85,-.75) node[above]{$z$};
\draw[->](-1.85,-1.5)--({-1.85+0.75*cos(15)},{-1.5+0.75*sin(15)}) node[right]{$x$};
\draw[->](-1.85,-1.5)--({-1.85-0.75*cos(30)},{-1.5+0.75*sin(30)}) node[left]{$y$};

\node at (-1.5,0.5) {\begin{tiny}\begin{rotate}{-30}$p$-doped\end{rotate}\end{tiny}};
\node at (-1.5,0.25) {\begin{tiny}\begin{rotate}{-30}silicon\end{rotate}\end{tiny}};
\node at (-1.7,0.1) {\begin{tiny}\begin{rotate}{-30}substrate\end{rotate}\end{tiny}};
\node at (-1.5,1) {\begin{tiny}\begin{rotate}{15}Accessible top surface\end{rotate}\end{tiny}};
\node at (-1.35,0.77) {\begin{tiny}\begin{rotate}{15}of substrate\end{rotate}\end{tiny}};

\end{tikzpicture}\end{minipage}
\begin{minipage}{0.2\textwidth}\fontfamily{ptm}\selectfont\begin{picture}(100,60)(0,-10)
\put(-15.5,37.5){\color{gray}{$ C_1$}}
\put(65.5,37.5){\color{gray}{$ C_3$}}
\put(65.5,-2.5){\color{gray}{$ C_2$}}
\put(-1.5,40){\circle{3}}
\put(61.5,40){\circle{3}}
\put(61.5,0){\circle{3}}
\put(0,40){\line(1,0){20}}
\put(25,45){$ R_{\mathrm e}$}
\put(20,42.5){\line(1,0){20}}
\put(20,42.5){\line(0,-1){5}}
\put(20,37.5){\line(1,0){20}}
\put(40,37.5){\line(0,1){5}}
\put(40,40){\line(1,0){20}}
\put(10,40){\line(0,-1){10}}
\put(-3,17.5){$ R_{\mathrm d}$}
\put(12.5,30){\line(0,-1){20}}
\put(12.5,30){\line(-1,0){5}}
\put(7.5,30){\line(0,-1){20}}
\put(7.5,10){\line(1,0){5}}
\put(10,10){\line(0,-1){10}}
\put(50,40){\line(0,-1){10}}
\put(53.5,17.5){$ R_{\mathrm d}$}
\put(52.5,30){\line(0,-1){20}}
\put(52.5,30){\line(-1,0){5}}
\put(47.5,30){\line(0,-1){20}}
\put(47.5,10){\line(1,0){5}}
\put(50,10){\line(0,-1){10}}
\put(10,0){\line(1,0){50}}
\end{picture}\end{minipage}

\begin{minipage}{0.5\textwidth}\fontfamily{ptm}\begin{center}\selectfont(a)\end{center}\end{minipage}
\begin{minipage}{0.2\textwidth}\fontfamily{ptm}\selectfont\begin{center}(b)\end{center}\end{minipage}
\caption{(Adapted from   \cite{Ausserlechner2016a}) \textbf{a} Vertical Hall effect device with three contacts and a single mirror symmetry.
\textbf{b} The equivalent resistor circuit of the device at zero magnetic field: $C_1$ and $C_3$ are the outer contacts, $C_2$ is the inner contact.
 \label{fig:3C}}\end{figure}
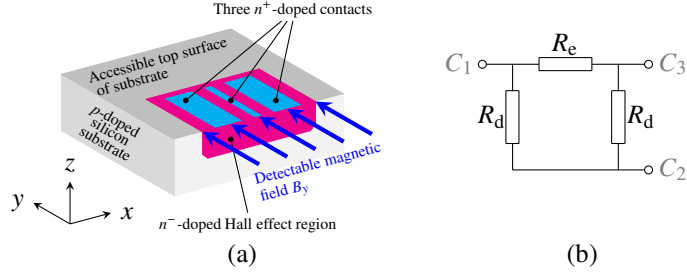

Recently, non-classical Hall devices are getting more attention, because they can detect magnetic fields parallel to the surface of the semiconductor --- they are known as Vertical Hall devices. The smallest ones have only three contacts as shown in Fig.~\ref{fig:3C}  \cite{Ausserlechner2016a}. If current is forced to flow between any two of its contacts, the output voltage at the third contact changes with magnetic field. This magnetic field sensitivity is similar to the case of traditional Hall plates, only the geometry factor $     G_{H0}^{(3C)} $ is different \cite{Ausserlechner2016a}. In contrast to $     G_{H0}^{(4C)} $,   the low field Hall geometry factor  $     G_{H0}^{(3C)} $    of devices with three contacts is a function of the resistances $R_{\mathrm e}$ and $R_{\mathrm d} $ of the ERC plus the sheet resistance $ R_{\mathrm {sh}}$.  For the case of devices having single mirror symmetry it is given explicitly in \cite{Ausserlechner2018} that \begin{align}G_{H0}^{(3C)}
=\frac{2I(\alpha,\beta)}{\mathbf K(\sqrt{\alpha})\mathbf K(\sqrt{\beta})},
\end{align} where the double integral representation for $I(\alpha,\beta) $  is given as the first equality in \eqref{eq:Iab}, and the relations \begin{align}
\frac{\mathbf K'(\sqrt{\alpha})}{\mathbf K(\sqrt{\alpha})}=\frac{R_{\mathrm e}R_{\mathrm d}}{(R_{\mathrm e}+2R_{\mathrm d})R_{\mathrm {sh}}},\quad \frac{\mathbf K'(\sqrt{\beta})}{\mathbf K(\sqrt{\beta})}=\frac{R_{\mathrm d}}{R_{\mathrm {sh}}}
\end{align}define the parameters $ \alpha,\beta$ in terms of effective resistances of the device (Fig.~\ref{fig:3C}). The corresponding SNR is proportional to $ I(\alpha,\beta)\big/\sqrt{\mathbf K(\sqrt{\alpha})\mathbf K'(\sqrt{\alpha})\mathbf K(\sqrt{\beta})\mathbf K'(\sqrt{\beta})}$. Hall devices with three contacts are conjectured to have the same symmetry property as $90^\circ$ symmetric Hall plates with four contacts:
\begin{quote}\textit{Such Hall effect devices have  the same SNR as their complementary devices.}\end{quote}

In other words, numerical experiments have suggested that $ I(\alpha,\beta)=I(1-\beta,1-\alpha)$,
        in the notation of  \eqref{eq:Iab}. The rest of the paper gives a mathematical proof of this symmetry.

        It is interesting to formulate, from a more philosophical point of
view, a rationale for the emergence of elliptic symmetries from the
intrinsic properties of these Hall devices and the possibility of predicting others. Two salient features which may be critical are the
existence of reflection symmetry in the device geometry and the
presence of the magnetic field---a pseudo-vector---which reverses under mirror reflection. It may be that any such structure will be
fruitful in this regard.

\section{Transformations of Certain Double Integrals\label{sec:int_id}}

\begin{lemma} For $\alpha,\beta\in(0,1) $, we have \begin{subequations}\allowdisplaybreaks{\begin{align}{}&\int
_0^{\pi/2}\frac{\D \theta}{\sqrt{1-\smash[b]{(1-\alpha)\sin^2\theta}}}\int_0^\theta\frac{\D\phi}{\sqrt{1-\smash[b]{(1-\beta)\sin^2\phi}}}
\notag\\={}&\frac{1}{\pi}\int_0^\beta\frac{\mathbf K(\sqrt{1-\smash[b]{\beta}})\mathbf K(\sqrt{t})}{\sqrt{t}+\sqrt{\vphantom{1}\alpha}}\frac{\D t}{\sqrt{t}}+\frac{1}{\pi}\int_\beta^1\frac{\mathbf K(\sqrt{\smash[b]{\beta}})\mathbf K(\sqrt{1-t})}{\sqrt{t}+\sqrt{\vphantom{1}\alpha}}\frac{\D t}{\sqrt{t}}\label{eq:I1a}\\={}&\mathbf K(\sqrt{1-\alpha})\mathbf K(\sqrt{1-\beta})-\frac{1}{\pi}\int_0^\alpha\frac{\mathbf K(\sqrt{1-\smash[b]{\alpha}})\mathbf K(\sqrt{t})}{\sqrt{t}+\sqrt{\beta}}\frac{\D t}{\sqrt{t}}\notag\\{}&-\frac{1}{\pi}\int_\alpha^1\frac{\mathbf K(\sqrt{\smash[b]{\alpha}})\mathbf K(\sqrt{1-t})}{\sqrt{t}+\sqrt{\vphantom{1}\beta}}\frac{\D t}{\sqrt{t}};\label{eq:I1b}\end{align}}for $ 0<\beta<\alpha<1$, we have \end{subequations}
\begin{align}&\int_0^{\pi/2}\frac{\sqrt{\alpha(1-\alpha)}\sin\theta\D \theta}{\sqrt{\alpha \smash[b]{ (1-\beta )-(1-\alpha ) \beta  \cos ^2\theta}}\sqrt{1-\smash[b]{(1-\alpha)\sin^2\theta}} }\int_0^\theta\frac{\D\phi}{\sqrt{1-\smash[b]{(1-\alpha)\sin^2\phi}}}\notag\\={}&-\mathscr P\int_0^1\frac{\mathbf K(\sqrt{\smash[b]{\beta}})\mathbf K(\sqrt{1-t})\D t}{\pi(\alpha -t)}-\frac{1}{\pi}\int_0^\beta\frac{\mathbf K(\sqrt{1-\smash[b]{\beta}})\mathbf K(\sqrt{t})-\mathbf K(\sqrt{\smash[b]{\beta}})\mathbf K(\sqrt{1-t})}{\alpha -t}\D t\notag\\{}&+\frac{1}{\pi}\int_0^\beta\frac{\mathbf K(\sqrt{1-\smash[b]{\beta}})\mathbf K(\sqrt{t})}{\sqrt{1-t}+\sqrt{\vphantom{1}1-\alpha}}\frac{\D t}{\sqrt{1-t}}+\frac{1}{\pi}\int_\beta^1\frac{\mathbf K(\sqrt{\smash[b]{\beta}})\mathbf K(\sqrt{1-t})}{\sqrt{1-t}+\sqrt{\vphantom{1}1-\alpha}}\frac{\D t}{\sqrt{1-t}},\label{eq:I2}\end{align}where $ \mathscr P$ denotes Cauchy principal value.\end{lemma}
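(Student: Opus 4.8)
The plan is to treat the three identities as a cascade: establish \eqref{eq:I1a} first, deduce \eqref{eq:I1b} from it by a reflection argument, and only then attack \eqref{eq:I2} with the same differential-equation machinery. Write $J(\alpha,\beta)$ for the iterated integral on the left of \eqref{eq:I1a}. The cleanest preliminary observation is a Fubini/reflection identity: integrating the inner incomplete integral over the two triangles of the square $[0,\pi/2]^2$ and renaming the two angular variables in the complementary triangle gives
\[
J(\alpha,\beta)+J(\beta,\alpha)=\mathbf K(\sqrt{1-\alpha})\,\mathbf K(\sqrt{1-\beta}).
\]
Denoting by $\Phi(\alpha,\beta)$ the right-hand side of \eqref{eq:I1a}, one checks directly that the right-hand side of \eqref{eq:I1b} is exactly $\mathbf K(\sqrt{1-\alpha})\mathbf K(\sqrt{1-\beta})-\Phi(\beta,\alpha)$. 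Hence, once \eqref{eq:I1a} is known (that is, $J(\alpha,\beta)=\Phi(\alpha,\beta)$), identity \eqref{eq:I1b} follows at once from the reflection identity, and the entire weight of the first display rests on \eqref{eq:I1a}.

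For \eqref{eq:I1a} I would adopt the Broadhurst--Zudilin inhomogeneous-ODE strategy in the variable $\alpha$, with $\beta$ held fixed. Since the inner integral $\int_0^\theta\D\phi/\sqrt{1-(1-\beta)\sin^2\phi}$ does not involve $\alpha$, applying the Legendre operator $\mathcal L_\alpha:=\alpha(1-\alpha)\partial_\alpha^2+(1-2\alpha)\partial_\alpha-\tfrac14$ under the integral sign hits only the outer weight. Because $\mathcal L_\alpha\mathbf K(\sqrt{1-\alpha})=0$, there is a Picard--Fuchs certificate $V(\alpha,\theta)=\sin\theta\cos\theta\,(\ldots)$, assembled from $(1-(1-\alpha)\sin^2\theta)^{-3/2}$ and $(1-(1-\alpha)\sin^2\theta)^{-1/2}$, with $\mathcal L_\alpha(1-(1-\alpha)\sin^2\theta)^{-1/2}=\partial_\theta V$ and $V$ vanishing at $\theta=0,\pi/2$. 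Integrating by parts in $\theta$ kills the boundary term (doubly, since the inner integral also vanishes at $\theta=0$), and $\partial_\theta$ of the inner integral is just $1/\sqrt{1-(1-\beta)\sin^2\theta}$; the double integral therefore collapses to a single one whose integrand, after $u=\sin^2\theta$, is algebraic in $u$. This yields an explicit elementary source $R(\alpha,\beta)$ with $\mathcal L_\alpha J=R$.

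The two homogeneous solutions of $\mathcal L_\alpha u=0$ are $\mathbf K(\sqrt\alpha)$ and $\mathbf K(\sqrt{1-\alpha})$, whose Wronskian is proportional to $1/[\alpha(1-\alpha)]$ by the Legendre relation. Variation of parameters then writes $J$ as a combination of these two solutions with coefficients given by quadratures of $R$ against $\mathbf K(\sqrt{\,\cdot\,})$; because $R$ is elementary, those coefficients are precisely Stieltjes-type transforms of $\mathbf K$, which is the structural origin of the kernels $1/(\sqrt t+\sqrt\alpha)$ and of the break at $t=\beta$ on the right of \eqref{eq:I1a}. I expect the principal obstacle to sit exactly here: carrying out the change from $t$ to the elliptic parameter that turns the variation-of-parameters quadratures into the stated $t$-integrals, correctly assembling the two branches $\mathbf K(\sqrt t)$ and $\mathbf K(\sqrt{1-t})$, and pinning the two integration constants from the $\alpha\to0^+$ and $\alpha\to1^-$ asymptotics (where $\mathbf K(\sqrt\alpha)$ blows up, forcing its coefficient to vanish to the correct order so that $J$ stays finite as $\alpha\to1^-$).

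Finally, for \eqref{eq:I2} the same apparatus applies after one simplifying observation: the awkward radical factorises, since $\alpha(1-\beta)-(1-\alpha)\beta\cos^2\theta=\alpha-\beta\,[1-(1-\alpha)\sin^2\theta]$, so the extra weight is $\sqrt{\alpha(1-\alpha)}\,\sin\theta\big/\sqrt{\alpha-\beta D}$ with $D=1-(1-\alpha)\sin^2\theta$, and the inner integral now carries the same modulus $\sqrt{1-\alpha}$ as the outer one. Running the inhomogeneous-ODE argument once more, now with \eqref{eq:I1a} and \eqref{eq:I1b} available as inputs, should reduce \eqref{eq:I2} to elementary quadratures together with the Stieltjes transforms already evaluated. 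The genuinely new feature, and the main technical obstacle in this last step, is that $\alpha$ now lies inside the integration range $0<\beta<\alpha<1$, so the relevant kernel develops a pole at $t=\alpha$; tracking this singularity through the reduction is precisely what generates the Cauchy principal value $\mathscr P\int_0^1$ in \eqref{eq:I2}, and checking that the finite parts agree on both sides is where the care will be needed.
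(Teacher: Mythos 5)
Your handling of the first display is essentially the paper's own argument with the roles of the two parameters exchanged: the paper applies $\widehat L_\beta$ (the operator variable sits in the inner integral, so the Picard--Fuchs certificate is evaluated at $\phi=\theta$ and the outer integration becomes elementary), while you apply $\mathcal L_\alpha$ to the outer weight and integrate by parts against the inner integral; both routes give an elementary source term, use the same Fubini identity $J(\alpha,\beta)+J(\beta,\alpha)=\mathbf K(\sqrt{1-\alpha})\,\mathbf K(\sqrt{1-\beta})$, and fix the homogeneous ambiguity by endpoint asymptotics. One bookkeeping slip: variation of parameters in $\alpha$, with basepoints $0$ and $1$, produces coefficients $\int_0^\alpha\mathbf K(\sqrt{t}\,)(\cdots)\,\D t$ and $\int_\alpha^1\mathbf K(\sqrt{1-t}\,)(\cdots)\,\D t$ with kernel $1/[\sqrt{t}(\sqrt{t}+\sqrt{\beta})]$ --- i.e.\ the break at $t=\alpha$ and the kernel of \eqref{eq:I1b} --- not, as you claim, the kernel $1/(\sqrt{t}+\sqrt{\alpha})$ and break at $t=\beta$ of \eqref{eq:I1a}. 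This is harmless: your method actually proves \eqref{eq:I1b} first and then yields \eqref{eq:I1a} through your reflection identity, which is the paper's cascade run backwards.

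The genuine gap is in \eqref{eq:I2}, in two respects. First, the operator there must be taken in $\beta$, which (by your own factorization) appears only inside $\sqrt{\alpha-\beta D}$; the operator $\mathcal L_\alpha$ you set up cannot be "run once more," because $\alpha$ occurs simultaneously in the prefactor, the radical, the outer weight and the inner integral, so the product rule generates cross terms (derivatives of the incomplete elliptic integral with respect to its modulus) that do not telescope into anything elementary. Second, and decisively: after the $\beta$-variation of parameters, the two sides of \eqref{eq:I2} differ by $g_1(\alpha)\mathbf K(\sqrt{\beta})+g_2(\alpha)\mathbf K(\sqrt{1-\beta})$, and endpoint asymptotics dispose only of $g_2$ (from $\beta\to0^+$, where $\mathbf K(\sqrt{1-\beta})$ diverges). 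Because \eqref{eq:I2} is restricted to $0<\beta<\alpha$, the limit $\beta\to1^-$ is unavailable, and $\beta\to0^+$ says nothing about $g_1$ since $\mathbf K(\sqrt{\beta})$ stays bounded there. The only remaining endpoint is $\beta\to\alpha^-$, and showing $g_1(\alpha)=0$ there is not a routine check that "the finite parts agree": it requires the principal-value reciprocity identity \eqref{eq:recip}, namely that a specific combination of $\mathscr P$-integrals of $\mathbf K(\sqrt{t}\,)\mathbf K(\sqrt{1-t}\,)$-type products against $1/(\alpha-t)$ and $1/(\beta-t)$ cancels $\mathbf K(\sqrt{\alpha})\mathbf K(\sqrt{\beta})-\mathbf K(\sqrt{1-\alpha})\mathbf K(\sqrt{1-\beta})$ identically. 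The paper proves this by a separate argument (the left side of \eqref{eq:recip} is smooth in $\beta$, annihilated by $\widehat L_\beta$, and bounded as $\beta(1-\beta)\to0^+$), and combines it with the equal-modulus case of \eqref{eq:I1a}, which evaluates the residual single integrals to $\tfrac12[\mathbf K(\sqrt{\alpha})]^2$. This vanishing identity is the missing idea in your plan; without it, the argument stalls exactly at the step you flagged as "where the care will be needed."
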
\begin{proof}We note that the following differential operator\begin{align}\widehat L_\lambda:=
\frac{\partial}{\partial \lambda}\left[ \lambda(1-\lambda)\frac{\partial}{\partial \lambda} \right]-\frac{1}{4}
\end{align}annihilates both $ \mathbf K(\sqrt{\lambda}),\lambda\in(0,1)$ and $ \mathbf K(\sqrt{1-\lambda}),\lambda\in(0,1)$. The Wro\'nskian determinant for these two linearly independent solutions to the homogeneous differential equation $ \widehat L_\lambda f(\lambda)=0$  assumes the form\begin{align}
\det\begin{pmatrix}\mathbf K(\sqrt{\lambda}) & \mathbf K(\sqrt{1-\lambda}) \\
\frac{\partial\mathbf K(\sqrt{\lambda})}{\partial \lambda} & \frac{\partial\mathbf K(\sqrt{1-\lambda})}{\partial \lambda} \\
\end{pmatrix}=-\frac{\pi}{4\lambda(1-\lambda)}.\label{eq:Wronskian}
\end{align}

It is straightforward to compute that\begin{align}{}&
\widehat L_\beta\int
_0^{\pi/2}\frac{\D \theta}{\sqrt{1-\smash[b]{(1-\alpha)\sin^2\theta}}}\int_0^\theta\frac{\D\phi}{\sqrt{1-\smash[b]{(1-\beta)\sin^2\phi}}}\notag\\={}&-\frac{1}{4}\int
_0^{\pi/2}\frac{\sin\theta\cos\theta\D \theta}{\sqrt{1-\smash[b]{(1-\alpha)\sin^2\theta}}[1-(1-\beta)\sin^2\theta]^{3/2}}\notag\\={}&-\frac{1}{4(\sqrt{\alpha}+\sqrt{\beta})\sqrt{\beta}},
\label{eq:elem}\end{align}as well as \begin{align}{}&
\widehat L_\beta\left[ \frac{1}{\pi}\int_0^\beta\frac{\mathbf K(\sqrt{1-\smash[b]{\beta}})\mathbf K(\sqrt{t})}{\sqrt{t}+\sqrt{\vphantom{1}\alpha}}\frac{\D t}{\sqrt{t}}+\frac{1}{\pi}\int_\beta^1\frac{\mathbf K(\sqrt{\smash[b]{\beta}})\mathbf K(\sqrt{1-t})}{\sqrt{t}+\sqrt{\vphantom{1}\alpha}}\frac{\D t}{\sqrt{t}} \right]\notag\\={}&-\frac{1}{4(\sqrt{\alpha}+\sqrt{\beta})\sqrt{\beta}}.\label{eq:Duhamel_spec}
\end{align}Here, it takes only elementary differentiations and integrations  to verify \eqref{eq:elem}, while one can use the Wro\'nskian determinant \eqref{eq:Wronskian} to show that  \eqref{eq:Duhamel_spec}  is  a special case of\begin{align}&\widehat L_\beta\left[ \frac{1}{\pi}\int_0^\beta\mathbf K(\sqrt{1-\smash[b]{\beta}})\mathbf K(\sqrt{t})g(\alpha,t)\D t+\frac{1}{\pi}\int_\beta^1\mathbf K(\sqrt{\smash[b]{\beta}})\mathbf K(\sqrt{1-t})g(\alpha,t)\D t \right]\notag\\={}&-\frac{g(\alpha,\beta)}{4},\label{eq:Duhamel}
\end{align} for any suitably regular bivariate function $ g(\alpha,\beta)$. Therefore, the identity \eqref{eq:I1a} must be true, up to an additive term $ f_1(\alpha)\mathbf K(\sqrt{\beta})+f_2(\alpha)\mathbf K(\sqrt{1-\beta})$.  For fixed $ \alpha\in(0,1)$, the expression\begin{align}{}&
f_1(\alpha)\mathbf K(\sqrt{\beta})+f_2(\alpha)\mathbf K(\sqrt{1-\beta})\notag\\:={}&\int
_0^{\pi/2}\frac{\D \theta}{\sqrt{1-\smash[b]{(1-\alpha)\sin^2\theta}}}\int_0^\theta\frac{\D\phi}{\sqrt{1-\smash[b]{(1-\beta)\sin^2\phi}}}\notag\\{}&-\left[ \frac{1}{\pi}\int_0^\beta\frac{\mathbf K(\sqrt{1-\smash[b]{\beta}})\mathbf K(\sqrt{t})}{\sqrt{t}+\sqrt{\vphantom{1}\alpha}}\frac{\D t}{\sqrt{t}}+\frac{1}{\pi}\int_\beta^1\frac{\mathbf K(\sqrt{\smash[b]{\beta}})\mathbf K(\sqrt{1-t})}{\sqrt{t}+\sqrt{\vphantom{1}\alpha}}\frac{\D t}{\sqrt{t}} \right]
\end{align} remains finite as $ \beta\to0^+$, so we must have $ f_2(\alpha)=0$. By subsequent asymptotic analysis in the $\beta\to1^- $ regime, we can confirm $ f_1(\alpha)=0$, thereby arriving at  \eqref{eq:I1a}  in its entirety.

To deduce   \eqref{eq:I1b} from \eqref{eq:I1a}, simply notice that \allowdisplaybreaks{\begin{align}&
\int
_0^{\pi/2}\frac{\D \theta}{\sqrt{1-\smash[b]{(1-\alpha)\sin^2\theta}}}\int_0^\theta\frac{\D\phi}{\sqrt{1-\smash[b]{(1-\beta)\sin^2\phi}}}\notag\\{}&+\int
_0^{\pi/2}\frac{\D \theta}{\sqrt{1-\smash[b]{(1-\beta)\sin^2\theta}}}\int_0^\theta\frac{\D\phi}{\sqrt{1-\smash[b]{(1-\alpha)\sin^2\phi}}}\notag\\={}&\int
_0^{\pi/2}\frac{\D \theta}{\sqrt{1-\smash[b]{(1-\alpha)\sin^2\theta}}}\int_0^{\pi/2}\frac{\D\phi}{\sqrt{1-\smash[b]{(1-\beta)\sin^2\phi}}}\notag\\={}&\mathbf K(\sqrt{1-\alpha})\mathbf K(\sqrt{1-\beta}).
\end{align}}

Differentiating under the integral sign, and integrating by parts (with respect to $ \theta$), we can verify that \begin{align}{}&
\widehat L_\beta\int_0^{\pi/2}\frac{\sqrt{\alpha(1-\alpha)}\sin\theta\left[ \int_0^\theta\frac{\D\phi}{\sqrt{1-\smash[b]{(1-\alpha)\sin^2\phi}}} \right]\D \theta}{\sqrt{\alpha \smash[b]{ (1-\beta )-(1-\alpha ) \beta  \cos ^2\theta}}\sqrt{1-\smash[b]{(1-\alpha)\sin^2\theta}}}\notag\\={}&\frac{\sqrt{1-\alpha}}{4(\alpha-\beta)\sqrt{1-\beta}}=\frac{1}{4(\alpha-\beta)}-\frac{1}{4(\sqrt{1-\alpha}+\sqrt{1-\beta})\sqrt{1-\beta}}. \end{align} According to our previous experience [cf.~\eqref{eq:Duhamel}], there must exist functions $ g_1(\alpha)$ and $ g_2(\alpha)$ such that \allowdisplaybreaks{\begin{align}&
g_1(\alpha)\mathbf K(\sqrt{\beta})+g_2(\alpha)\mathbf K(\sqrt{1-\beta})\notag\\={}&\int_0^{\pi/2}\frac{\sqrt{\alpha(1-\alpha)}\sin\theta\left[ \int_0^\theta\frac{\D\phi}{\sqrt{1-\smash[b]{(1-\alpha)\sin^2\phi}}} \right]\D \theta}{\sqrt{\alpha \smash[b]{ (1-\beta )-(1-\alpha ) \beta  \cos ^2\theta}}\sqrt{1-\smash[b]{(1-\alpha)\sin^2\theta}}}\notag\\{}&+\frac{1}{\pi}\int_0^\beta\frac{\mathbf K(\sqrt{1-\smash[b]{\beta}})\mathbf K(\sqrt{t})}{\alpha -t}\D t+\mathscr P\int_\beta^1\frac{\mathbf K(\sqrt{\smash[b]{\beta}})\mathbf K(\sqrt{1-t})}{\pi(\alpha -t)}\D t\notag\\{}&-\frac{1}{\pi}\int_0^\beta\frac{\mathbf K(\sqrt{1-\smash[b]{\beta}})\mathbf K(\sqrt{t})}{\sqrt{1-t}+\sqrt{\vphantom{1}1-\alpha}}\frac{\D t}{\sqrt{1-t}}-\frac{1}{\pi}\int_\beta^1\frac{\mathbf K(\sqrt{\smash[b]{\beta}})\mathbf K(\sqrt{1-t})}{\sqrt{1-t}+\sqrt{\vphantom{1}1-\alpha}}\frac{\D t}{\sqrt{1-t}}
\end{align}}holds for $ 0<\beta<\alpha<1$. In view of the asymptotic behavior in the regime where $ \beta\to0^+$, we must have $ g_2(\alpha)=0$. Then, we explore another extreme scenario, where $ \beta\to\alpha-0^+$, and\begin{align}&
g_1(\alpha)\mathbf K(\sqrt{\alpha})\notag\\={}&\frac{[\mathbf K(\sqrt{1-\alpha})]^2}{2}+\mathscr P\int_0^\alpha\frac{\mathbf K(\sqrt{1-\smash[b]{\alpha}})\mathbf K(\sqrt{t})}{\pi(\alpha -t)}\D t+\mathscr P\int_\alpha^1\frac{\mathbf K(\sqrt{\smash[b]{\alpha}})\mathbf K(\sqrt{1-t})}{\pi(\alpha -t)}\D t\notag\\{}&-\frac{1}{\pi}\int_0^\alpha\frac{\mathbf K(\sqrt{1-\smash[b]{\alpha}})\mathbf K(\sqrt{t})}{\sqrt{1-t}+\sqrt{\vphantom{1}1-\alpha}}\frac{\D t}{\sqrt{1-t}}-\frac{1}{\pi}\int_\alpha^1\frac{\mathbf K(\sqrt{\smash[b]{\alpha}})\mathbf K(\sqrt{1-t})}{\sqrt{1-t}+\sqrt{\vphantom{1}1-\alpha}}\frac{\D t}{\sqrt{1-t}}.\label{eq:g1}
\end{align}Here, by a reflection $ t=1-s$ and a back reference to    \eqref{eq:I1a}, we obtain\begin{align}&
\frac{1}{\pi}\int_0^\alpha\frac{\mathbf K(\sqrt{1-\smash[b]{\alpha}})\mathbf K(\sqrt{t})}{\sqrt{1-t}+\sqrt{\vphantom{1}1-\alpha}}\frac{\D t}{\sqrt{1-t}}+\frac{1}{\pi}\int_\alpha^1\frac{\mathbf K(\sqrt{\smash[b]{\alpha}})\mathbf K(\sqrt{1-t})}{\sqrt{1-t}+\sqrt{\vphantom{1}1-\alpha}}\frac{\D t}{\sqrt{1-t}}\notag\\={}&\frac{1}{\pi}\int^1_{1-\alpha}\frac{\mathbf K(\sqrt{1-\smash[b]{\alpha}})\mathbf K(\sqrt{1-s})}{\sqrt{s}+\sqrt{\vphantom{1}1-\alpha}}\frac{\D s}{\sqrt{s}}+\frac{1}{\pi}\int_0^{1-\alpha}\frac{\mathbf K(\sqrt{\smash[b]{\alpha}})\mathbf K(\sqrt{s})}{\sqrt{s}+\sqrt{\vphantom{1}1-\alpha}}\frac{\D s}{\sqrt{s}}\notag\\={}&\int
_0^{\pi/2}\frac{\D \theta}{\sqrt{1-\smash[b]{\alpha\sin^2\theta}}}\int_0^\theta\frac{\D\phi}{\sqrt{1-\smash[b]{\alpha\sin^2\phi}}}=\frac{[\mathbf K(\sqrt{\alpha})]^2}{2}.
\end{align}  Thus, we may reduce \eqref{eq:g1}
into \begin{align}
g_1(\alpha)\mathbf K(\sqrt{\alpha})={}&\frac{[\mathbf K(\sqrt{1-\alpha})]^2-[\mathbf K(\sqrt{\alpha})]^2}{2}+\mathscr P\int_0^\alpha\frac{\mathbf K(\sqrt{1-\smash[b]{\alpha}})\mathbf K(\sqrt{t})}{\pi(\alpha -t)}\D t\notag\\{}&+\mathscr P\int_\alpha^1\frac{\mathbf K(\sqrt{\smash[b]{\alpha}})\mathbf K(\sqrt{1-t})}{\pi(\alpha -t)}\D t.
\end{align}We can prove the following identity for distinct $ \alpha,\beta\in(0,1)$:\begin{align}&\mathscr P
\int_0^\alpha\frac{\mathbf K(\sqrt{1-\smash[b]{\alpha}})\mathbf K(\sqrt{t})}{\pi(\beta-t)}\D t+\mathscr P
\int_\alpha^1\frac{\mathbf K(\sqrt{\smash[b]{\alpha}})\mathbf K(\sqrt{1-t})}{\pi(\beta-t)}\D t\notag\\{}&+\mathscr P
\int_0^\beta\frac{\mathbf K(\sqrt{1-\smash[b]{\beta}})\mathbf K(\sqrt{t})}{\pi(\alpha -t)}\D t+\mathscr P
\int_\beta^1\frac{\mathbf K(\sqrt{\smash[b]{\beta}})\mathbf K(\sqrt{1-t})}{\pi(\alpha-t)}\D t\notag\\{}&-\mathbf K(\sqrt{\alpha})\mathbf K(\sqrt{\beta})+\mathbf K(\sqrt{1-\alpha})\mathbf K(\sqrt{1-\beta})=0\label{eq:recip}
\end{align} by checking that its left-hand side extends to a smooth function of $\beta\in(0,1) $ that is annihilated by $ \widehat L_\beta$ (cf.~\cite[(2.1.6)]{AGF_PartII}), and remains finite as  $ \beta(1-\beta)\to0^+$. In view of this, the expression $ g_1(\alpha)\mathbf K(\sqrt{\alpha})$ must vanish identically, as we send  $ \beta\to\alpha$ in \eqref{eq:recip}. This completes the proof of \eqref{eq:I2}. \end{proof}\begin{remark}An alternative formulation of   \eqref{eq:I1a}, namely\begin{align}&
\int_0^{\pi/2}\frac{\D\theta}{\sqrt{1-\smash[b]{\alpha}\sin^2\theta}}\int_0^\theta\frac{\D\phi}{\sqrt{1-\beta\sin^2\smash[b]{\phi}}}\notag\\={}&\frac{1}{\pi}\int_0^\beta\frac{\mathbf K(\sqrt{1-\smash[b]{\beta}})\mathbf K(\sqrt{t})}{\sqrt{1-t}+\sqrt{1-\alpha}}\frac{\D t}{\sqrt{1-t}}+\frac{1}{\pi}\int_\beta^1\frac{\mathbf K(\sqrt{\smash[b]{\beta}})\mathbf K(\sqrt{1-t})}{\sqrt{1-t}+\sqrt{1-\alpha}}\frac{\D t}{\sqrt{1-t}},
\end{align}appeared in \cite[(2)]{GlasserZhou2017}, as a precursor to the proof of the symmetric identity $A(p,q)=A(\sqrt{1-p^2},\sqrt{1-q^2}) $.  Originally, \cite[(2)]{GlasserZhou2017} was built on some addition formulae of Legendre type from \cite{AGF_PartII}, which involved  heavier computations than the procedures presented in the proof above. After reading \cite{BroadhurstZudilin2017}, one of us (Y.Z.) realized that the proof of  \cite[(2)]{GlasserZhou2017}  can be simplified by  inhomogeneous differential equations, as exploited by Broadhurst and Zudilin in their proof of $ A(p,p)=A(\sqrt{1-p^2},\sqrt{1-p^2})$. Similarly, one can verify several integral identities in  \cite{AGF_PartII} (which are triple integral analogs of  \cite[(2)]{GlasserZhou2017}) by differential equations and elementary integrations, once their forms are discovered.  \end{remark}\begin{remark}Since  we have  \cite[(51)]{Zhou2013Pnu}\begin{align}
\frac{[\mathbf K(\sqrt{1-\alpha})]^2-[\mathbf K(\sqrt{\alpha})]^2}{2}=-\mathscr P\int_0^1\frac{\mathbf K(\sqrt{1-\smash[b]{t}})\mathbf K(\sqrt{t})}{\pi(\alpha -t)}\D t,
\end{align}our proof of $g_1(\alpha)\mathbf K(\sqrt{\alpha})=0 $ amounts to the following vanishing identity\begin{align}
0={}&\frac{1}{\pi}\int_0^\alpha\frac{[\mathbf K(\sqrt{1-\smash[b]{\alpha}})-\mathbf K(\sqrt{1-\smash[b]{t}})]\mathbf K(\sqrt{t})}{\alpha -t}\D t\notag\\{}&+\frac{1}{\pi}\int_\alpha^1\frac{[\mathbf K(\sqrt{\smash[b]{\alpha}})-\mathbf K(\sqrt{t})]\mathbf K(\sqrt{1-t})}{\alpha -t}\D t.
\end{align}There are many more vanishing identities of similar shape in \cite[\S3.2]{AGF_PartII}, which are relevant to the arithmetic studies of automorphic Green's functions.\end{remark}
\begin{theorem}\label{thm:ab_recip}The double integral identity in \eqref{eq:Iab} holds.\end{theorem}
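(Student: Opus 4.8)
The plan is to feed the Lemma's single-integral evaluations into $I(\alpha,\beta)=\mathscr A(\alpha,\beta)-\mathscr B(\alpha,\beta)$, where $\mathscr A(\alpha,\beta)$ abbreviates the first double integral in \eqref{eq:Iab} (evaluated by \eqref{eq:I1b}) and $\mathscr B(\alpha,\beta)$ the second (evaluated by \eqref{eq:I2}), and then to exploit the fact that $(\alpha,\beta)\mapsto(1-\beta,1-\alpha)$ is an involution. Write $\mathscr B=\mathscr B_{\mathrm{pv}}+\mathscr B_{\mathrm{reg}}$, where $\mathscr B_{\mathrm{pv}}$ collects the first two (principal-value) summands on the right of \eqref{eq:I2} and $\mathscr B_{\mathrm{reg}}$ the last two (regular) summands. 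First I would note that, after the reflection $t\mapsto 1-t$, the two terms of \eqref{eq:I1b} carrying the kernel $1/(\sqrt t+\sqrt\beta)$ cancel exactly against $\mathscr B_{\mathrm{reg}}(1-\beta,1-\alpha)$, which yields the clean master relation $\mathscr A(\alpha,\beta)=\mathbf K(\sqrt{1-\alpha})\mathbf K(\sqrt{1-\beta})-\mathscr B_{\mathrm{reg}}(1-\beta,1-\alpha)$. Substituting $(\alpha,\beta)\mapsto(1-\beta,1-\alpha)$ back into this relation supplies the companion $\mathscr A(1-\beta,1-\alpha)=\mathbf K(\sqrt\alpha)\mathbf K(\sqrt\beta)-\mathscr B_{\mathrm{reg}}(\alpha,\beta)$.

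Next I would form the difference $I(\alpha,\beta)-I(1-\beta,1-\alpha)$ using these two relations together with $I=\mathscr A-\mathscr B$ and the splitting $\mathscr B=\mathscr B_{\mathrm{pv}}+\mathscr B_{\mathrm{reg}}$. The four $\mathscr B_{\mathrm{reg}}$ contributions cancel in two pairs, so the difference collapses to $\mathbf K(\sqrt{1-\alpha})\mathbf K(\sqrt{1-\beta})-\mathbf K(\sqrt\alpha)\mathbf K(\sqrt\beta)-\mathscr B_{\mathrm{pv}}(\alpha,\beta)+\mathscr B_{\mathrm{pv}}(1-\beta,1-\alpha)$, and the entire problem reduces to the two principal-value blocks. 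Introducing the abbreviation $H(a;b):=\mathscr P\int_0^a\frac{\mathbf K(\sqrt{1-a})\mathbf K(\sqrt t)}{\pi(b-t)}\D t+\mathscr P\int_a^1\frac{\mathbf K(\sqrt a)\mathbf K(\sqrt{1-t})}{\pi(b-t)}\D t$, I would show $\mathscr B_{\mathrm{pv}}(\alpha,\beta)=-H(\beta;\alpha)$: splitting the principal-value integral at $t=\beta$ and using $\beta<\alpha$ to locate the pole at $t=\alpha$ solely in $[\beta,1]$, the $\mathbf K(\sqrt\beta)\mathbf K(\sqrt{1-t})$ contributions over $[0,\beta]$ cancel and precisely the $H(\beta;\alpha)$ block survives, with the opposite sign.

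A reflection $t\mapsto 1-t$ in the definition of $H$ then shows $H(1-a;1-b)=-H(a;b)$, whence $\mathscr B_{\mathrm{pv}}(1-\beta,1-\alpha)=-H(1-\alpha;1-\beta)=H(\alpha;\beta)$. Consequently $-\mathscr B_{\mathrm{pv}}(\alpha,\beta)+\mathscr B_{\mathrm{pv}}(1-\beta,1-\alpha)=H(\beta;\alpha)+H(\alpha;\beta)$, and the reciprocity identity \eqref{eq:recip}, which is exactly $H(\alpha;\beta)+H(\beta;\alpha)=\mathbf K(\sqrt\alpha)\mathbf K(\sqrt\beta)-\mathbf K(\sqrt{1-\alpha})\mathbf K(\sqrt{1-\beta})$, makes the collapsed difference vanish identically. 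This proves $I(\alpha,\beta)=I(1-\beta,1-\alpha)$ on the open wedge $0<\beta<\alpha<1$; the boundary and diagonal cases of \eqref{eq:Iab} then follow by continuity, noting in particular that both sides reduce to $0$ when $\beta=\alpha$, since there the integrand of the second double integral degenerates to that of the first.

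The step I expect to be the main obstacle is the careful bookkeeping of the Cauchy principal values. One must track which sub-integrals genuinely enclose the pole under the standing hypothesis $\beta<\alpha$, verify that the reflection $t\mapsto 1-t$ transports the principal value with the correct sign so that $H(1-a;1-b)=-H(a;b)$ holds, and confirm that the endpoint contributions generated when the singular integrals are split all cancel rather than leaving residual boundary terms. Once the reduction to $H(\alpha;\beta)+H(\beta;\alpha)$ is secured, the reciprocity \eqref{eq:recip} already established in the Lemma closes the argument with no further analysis.
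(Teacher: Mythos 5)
Your proposal is correct and follows essentially the same route as the paper's own proof: both evaluate the first double integral by \eqref{eq:I1b} and the second by \eqref{eq:I2}, observe that under the involution $(\alpha,\beta)\mapsto(1-\beta,1-\alpha)$ all the regular (non--principal-value) integrals cancel after the reflection $t\mapsto 1-t$, and reduce the surviving principal-value blocks exactly to the reciprocity identity \eqref{eq:recip}, finishing by continuity. Your explicit $\mathscr B_{\mathrm{pv}}/\mathscr B_{\mathrm{reg}}$ splitting, the notation $H(a;b)$, and the antisymmetry $H(1-a;1-b)=-H(a;b)$ are just a tidier bookkeeping of the same cancellations the paper carries out in one displayed computation.
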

\begin{proof}By now, it is clear that \begin{align}
I(\alpha,\beta)={}&\mathbf K(\sqrt{1-\alpha})\mathbf K(\sqrt{1-\beta})\notag\\{}&+\frac{1}{\pi}\int_0^\beta\frac{\mathbf K(\sqrt{1-\smash[b]{\beta}})\mathbf K(\sqrt{t})}{\alpha -t}\D t+\mathscr P\int_\beta^1\frac{\mathbf K(\sqrt{\smash[b]{\beta}})\mathbf K(\sqrt{1-t})}{\pi(\alpha -t)}\D t\notag\\{}&-\frac{1}{\pi}\int_0^\alpha\frac{\mathbf K(\sqrt{1-\smash[b]{\alpha}})\mathbf K(\sqrt{t})}{\sqrt{t}+\sqrt{\beta}}\frac{\D t}{\sqrt{t}}-\frac{1}{\pi}\int_\alpha^1\frac{\mathbf K(\sqrt{\smash[b]{\alpha}})\mathbf K(\sqrt{1-t})}{\sqrt{t}+\sqrt{\vphantom{1}\beta}}\frac{\D t}{\sqrt{t}}\notag\\{}&-\frac{1}{\pi}\int^1_{1-\beta}\frac{\mathbf K(\sqrt{1-\smash[b]{\beta}})\mathbf K(\sqrt{1-s})}{\sqrt{s}+\sqrt{\vphantom{1}1-\alpha}}\frac{\D s}{\sqrt{s}}-\frac{1}{\pi}\int_0^{1-\beta}\frac{\mathbf K(\sqrt{\smash[b]{\beta}})\mathbf K(\sqrt{s})}{\sqrt{s}+\sqrt{\vphantom{1}1-\alpha}}\frac{\D s}{\sqrt{s}},
\end{align}so we must have \begin{align}&
I(\alpha,\beta)-I(1-\beta,1-\alpha)\notag\\={}&\mathbf K(\sqrt{1-\alpha})\mathbf K(\sqrt{1-\beta})-\mathbf K(\sqrt{\alpha})\mathbf K(\sqrt{\beta})\notag\\{}&+\frac{1}{\pi}\int_0^\beta\frac{\mathbf K(\sqrt{1-\smash[b]{\beta}})\mathbf K(\sqrt{t})}{\alpha -t}\D t+\mathscr P\int_\beta^1\frac{\mathbf K(\sqrt{\smash[b]{\beta}})\mathbf K(\sqrt{1-t})}{\pi(\alpha -t)}\D t\notag\\{}&-\frac{1}{\pi}\int_0^{1-\alpha}\frac{\mathbf K(\sqrt{\alpha})\mathbf K(\sqrt{t})}{1-\beta-t}\D t-\mathscr P\int_{1-\alpha}^1\frac{\mathbf K(\sqrt{\smash[b]{1-\alpha}})\mathbf K(\sqrt{1-t})}{\pi(1-\beta\ -t)}\D t=0,
\end{align}as a consequence of \eqref{eq:recip}. Although our proof above draws on the assumption that $ 0<\beta<\alpha<1$, its validity extends to $ 0\leq\beta\leq \alpha\leq 1$, by continuity.\end{proof}


\subsection*{Acknowledgements}
Partial financial support is acknowledged to the Spanish Junta de Castilla y Le\'on (VA057U16)
and MINECO (Project MTM2014-57129-C2-1-P).

\end{document}